\newenvironment{customthm}[1]
  {\innercustomthm}
  {\endinnercustomthm}
\definecolor{rouge}{RGB}{255,77,77}
\definecolor{vert}{RGB}{0,178,102}
\definecolor{jaune}{RGB}{255,255,0}
\definecolor{violet}{RGB}{208,32,144}
\definecolor{orange}{RGB}{255,140,0}
\definecolor{bleu}{RGB}{0,0,205}
\renewcommand{\L}{\mathbb{L}}
\newcommand{\MT}{\mathcal{M}}
\newcommand{\N}{\mathbb{N}}
\renewcommand{\H}{\mathbb{H}}
\newcommand{\Z}{\mathbb{Z}}
\newcommand{\U}{\mathbb{U}}
\newcommand{\T}{\mathbf{T}}
\newcommand{\define}{\emph}
\newcommand{\bnoir}[1]{\vbox to 3pt{\hbox{
\begin{tikzpicture}[scale=0.6*#1]
\draw [fill=black] (5,1) rectangle (5.5,1.5);
\end{tikzpicture}
}}}
\newcommand{\bnoirrouge}[1]{\vbox to 3pt{\hbox{
\begin{tikzpicture}[scale=0.6*#1]
\draw[color=rouge,fill=rouge] (5,1) rectangle (5.5,1.5);
\end{tikzpicture}
}}}
\newcommand{\bnoirbleu}[1]{\vbox to 3pt{\hbox{
\begin{tikzpicture}[scale=0.6*#1]
\draw[color=bleu,fill=bleu] (5,1) rectangle (5.5,1.5);
\end{tikzpicture}
}}}
\newcommand{\bnoirvert}[1]{\vbox to 3pt{\hbox{
\begin{tikzpicture}[scale=0.6*#1]
\draw[color=vert,fill=vert] (5,1) rectangle (5.5,1.5);
\end{tikzpicture}
}}}
\newcommand{\bblanc}[1]{\vbox to 3pt{\hbox{
\begin{tikzpicture}[scale=0.6*#1]
\draw (5,1) rectangle (5.5,1.5);
\end{tikzpicture}
}}}
\newcommand{\bblancrouge}[1]{\vbox to 3pt{\hbox{
\begin{tikzpicture}[scale=0.6*#1]
\draw[color=rouge] (5,1) rectangle (5.5,1.5);
\end{tikzpicture}
}}}
\newcommand{\bblancbleu}[1]{\vbox to 3pt{\hbox{
\begin{tikzpicture}[scale=0.6*#1]
\draw[color=bleu] (5,1) rectangle (5.5,1.5);
\end{tikzpicture}
}}}
\newcommand{\bblancvert}[1]{\vbox to 3pt{\hbox{
\begin{tikzpicture}[scale=0.6*#1]
\draw[color=vert] (5,1) rectangle (5.5,1.5);
\end{tikzpicture}
}}}
\newcommand{\hypertiledyadic}[5]{
\begin{scope}[shift={(#1,#2)},scale=#3]
\draw (0,0) -- (0,2) -- (0,2) to [controls=+(45:1.5) and +(135:1.5)] (4,2) -- (4,0) to [controls=+(135:0.75) and +(45:0.75)] (2,0) to [controls=+(135:0.75) and +(45:0.75)] (0,0) -- cycle ;
\draw (1,1.5) node{#4};
\draw (3,1.1) node{#5};
\end{scope}
}
\newcommand{\fleche}[5]{
\begin{scope}[color=#4,shift={(#1,#2)},scale=#3]
\draw[#5,->] (1,0.8) -> (1.5,-0.3); 
\end{scope}
}
\newcommand{\flechebis}[5]{
\begin{scope}[color=#4,shift={(#1,#2)},scale=#3]
\draw[#5,->] (3,0.5) -> (3.5,-0.3); 
\end{scope}
}
\title{Row-constrained effective sets of colourings in the $2$-fold horocyclic tessellations of $\H^2$ are sofic.}
\date{}
\author{Nathalie Aubrun\inst{1} and Mathieu Sablik\inst{2}}
\institute{LIP, ENS de Lyon -- CNRS -- INRIA -- UCBL, Universit\'e de Lyon
\and
I2M, Aix-Marseille Universit\'e\\
\email{nathalie.aubrun@ens-lyon.fr}, \email{sablik@latp.univ-mrs.fr}
}
\begin{document}

\maketitle
\begin{abstract}
In this article we prove that, restricted to the row-constrained case, effective sets of colourings in the $2$-fold horocyclic tessellations of the hyperbolic plane $\H^2$ are sofic.
\end{abstract}

\section*{Introduction}

% résultat d'Hochman
Multidimensional subshifts of finite type (SFT) and sofic subshifts are closed and shift-invariant subsets of colourings of $\Z^d$ for $d\geq2$ given by local rules, and enjoy strong computational properties. For instance, it is not possible to decide whether such a subshift is empty or not~\cite{Berger1966}. A clever result by Hochman~\cite{hochman2007drp}, then improved independently in~\cite{AubrunSablik2013} and~\cite{DBLP:conf/birthday/DurandRS10}, states that with an increase by $1$ of the dimension, effective subshifts are very close to sofic subshifts. % In more details, any effective subshift of dimension $d$ can be found as the projective subaction of a sofic subshift of dimension $d+1$.
Symbolic dynamics can be defined on structures more general than $\Z^d$, for instance finitely presented groups, and a natural question is to determine whether results similar to Hochman's can be proved in this case.

% A cardinality argument shows that even effective subshifts do not capture all possible subshifts: there always exist non effective subshifts. It is also noteworthy that there always exist -- at least as soon as the structure considered is infinite -- strictly sofic subshifts: take for instance the even subshift, composed by all configurations on alphabet $\{0,1\}$ in which connected components of $1$'s are either infinite or of even cardinal. This subshift cannot be of finite type, but a simple local rule, based on addition modulo 2, proves that it is indeed sofic.

% version plus forte sur le plan hyperbolique ?
Our intuition is that we can obtain an even stronger result than Hochman's on the hyperbolic plane. Two facts strengthen our intuition. First it is possible to encode Turing machine computations with local rules on the hyperbolic plane~\cite{Robinson1978}. Second the counting argument (see~\cite[p.~14]{Vanier2012}) used to prove the non soficness of the mirror subshift in $\Z^2$ -- one example of effective subshift that can be proved not to be sofic -- cannot be applied on the hyperbolic plane, for non-amenability reasons. This leads us to formulate the following conjecture, that basically means that the dimension increase is no longer needed to get a result similar to Hochman's in the hyperbolic plane.

\begin{conjecture}
\label{conjecture}
Effective sets of colourings are sofic on the hyperbolic plane.
\end{conjecture}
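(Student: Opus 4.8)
The plan is to adapt the proof strategy behind Hochman's theorem and its single-dimension-overhead refinements (\cite{AubrunSablik2013}, \cite{DBLP:conf/birthday/DurandRS10}) to the intrinsic geometry of the horocyclic tessellation, exploiting the intuition that on $\H^2$ the tree-like branching should supply, for free, the computational room that on $\Z^d$ only becomes available after increasing the dimension. Concretely, given an effective set of colourings $X$ defined by a recursively enumerable family $\mathcal{F}$ of forbidden patterns, I would construct an SFT $Y$ on the same tessellation whose alphabet splits into several superimposed layers, and exhibit a letter-to-letter factor map $\pi\colon Y\to X$ that merely reads off one distinguished \emph{visible} layer carrying a candidate $X$-colouring. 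The remaining layers must do two jobs at once: encode a self-similar hierarchy of computation zones of every scale, via local rules in the spirit of Robinson's embedding of Turing-machine computations in $\H^2$ (\cite{Robinson1978}); and, inside each zone, run a universal machine that enumerates $\mathcal{F}$ while scanning the portion of the visible layer lying in that zone, forcing a local contradiction --- hence forbidding the configuration in $Y$ --- exactly when a forbidden pattern is detected.

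First I would fix the self-similar skeleton. The $2$-fold horocyclic tessellation is organised by a binary branching in the horocyclic direction: each tile has a single parent on the horocycle nearer the distinguished boundary point and two children on the horocycle further from it, so moving away from that point the available space grows exponentially. I would use this branching to host nested computation grids, synchronised by a substitutive hierarchical marking enforced purely locally, so that zones of unbounded size appear and every tile of the visible layer is eventually contained in arbitrarily large zones. Because the growth is exponential rather than polynomial, the computation should in principle be able to address all finite windows of a colouring without the amenable bottleneck that forces the extra dimension on $\Z^d$; this is precisely the point at which the non-amenability of $\H^2$ is meant to do the work, and where the counting obstruction of \cite[p.~14]{Vanier2012} no longer applies.

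The checking mechanism then proceeds semi-algorithmically. Since $\mathcal{F}$ is only recursively enumerable, I would have each zone's machine dovetail the enumeration of $\mathcal{F}$ with comparisons against the fragment of the visible colouring the zone can read; if and when a match is found the local rules produce a forbidden transition, so no globally valid $Y$-configuration can project to a colouring that contains a forbidden pattern, while any point of $X$ lifts to a consistent, contradiction-free (non-halting) run. The soficness conclusion is then $\pi(Y)=X$: one inclusion is the forcing just described, the other is the lift of an arbitrary point of $X$ to a run of the semi-algorithm that never detects a violation.

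The main obstacle --- and the reason one cannot immediately go beyond the \emph{row-constrained} case --- is the data-gathering step for genuinely two-dimensional forbidden patterns. A pattern that spans several horocycles and several branches of the tree occupies tiles dispersed across the branching structure, so a computation zone must collect their colours onto a single tape; transporting this information forces one either to copy data outward along diverging branches, causing exponential duplication, or to merge data inward along converging branches, losing positional information, and reconciling the two while keeping every rule strictly local is the crux. When the forbidden patterns are confined to single horocycles this difficulty evaporates: each horocycle is a line along which a standard Turing-machine sweep suffices, and the tree merely bookkeeps the computation. Extending the gathering procedure to full two-dimensional windows is exactly what Conjecture~\ref{conjecture} demands, and is where the present approach must be pushed further.
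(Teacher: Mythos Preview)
The statement is a \emph{conjecture}; the paper explicitly says it is unable to prove it and establishes only the row-constrained special case (Theorem~1). Your proposal is likewise not a proof: you yourself flag the data-gathering step as the unresolved crux and end with ``is where the present approach must be pushed further''. So there is nothing to compare at the level of a completed argument --- neither side has one.

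At the level of strategy, your outline (visible layer plus hierarchical computation layers running a semi-algorithm that enumerates $\mathcal{F}$ and forbids matches) is indeed the paper's architecture in the row-constrained proof. However, your diagnosis of \emph{why} the row-constrained case is easier is off, and this matters because it points you at the wrong obstacle. ``Row-constrained'' in the paper does \emph{not} mean that forbidden patterns are confined to a single horocycle; forbidden patterns are genuine two-dimensional patterns on supports $\U_n$. What row-constrained means is that a distinguished symbol $\approx$ forces an entire upper half-plane to be trivial, providing a fixed initial row from which the computation-zone structure $\Sigma_2$ is anchored (the $^\infty(a^*b^*)^\infty$ row). The two-dimensional data-gathering problem you identify as the crux is in fact already solved in the paper by the \emph{dyadic encoding} $\Phi$ of Section~\ref{section:dyadic_encoding}: Proposition~\ref{proposition:ValueBottom} shows that a pattern on $\U_n$ is fully recoverable from a linear pattern on $\L_n$ in $\Phi(x)$, so a Turing machine sweeping a single row of the encoding can detect any two-dimensional forbidden pattern above it. The genuine remaining obstacle, as the paper's conclusion indicates, is producing the hierarchy of well-anchored computation zones \emph{without} the fixed initial row --- hence the mention of Goodman-Strauss's hierarchical aperiodic tiling as the natural next ingredient. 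Your proposal does not engage with that issue.
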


% on ne montre pas la conjecture, mais une version plus faible
Unfortunately we are for now unable to prove this conjecture, but we present here a preliminary result that will hopefully be a first step in proving Conjecture~\ref{conjecture}. The idea is to simplify the problem by enforcing the border of the configurations. In the Poincar\'e disk model, this corresponds to choosing a small disk tangent to the border of the whole Poincar\'e disk, and forcing colours inside that small disk. In the Poincar\'e upper half-plane model, this means that one horizontal line and everything above it is fixed. Note that the sets of configurations we now consider are no longer subshifts, since they do not satisfy any shift-invariance property. Nevertheless this approach makes sense for at least two reasons. First is corresponds to the intuitive vision of the way somebody would try to tile a surface with a set of tiles: the person would start his tiling on the border of the surface. The second reason is historical: before the proof of the undecidability of the domino problem on $\Z^2$ 
by Berger~\cite{Berger1966} (deciding whether an SFT is empty or not) Wang first proved that the row-constrained problem, where a single tile is forced to appear, is undecidable~\cite{Wang1961}. Also in the case of the hyperbolic plane, before Kari's~\cite{DBLP:conf/mcu/Kari07} and Margenstern's~\cite{Margenstern2008} proofs of the undecidability of the unconstrained domino problem, Robinson remarked that the origin fixed domino problem was also undecidable~\cite{Robinson1978}. Note that here we consider something a little bit more general than fixing a single tile, since we fix an entire line.

The paper is organized as follows. In Section~\ref{section:tilings_hyperbolic_plane} we the present the $2$-fold horocyclic tessellations of of the Poincar\'e upper half-plane model, then define colourings on this structure and explain how to encode Turing machine computations inside such objects. Section~\ref{section:dyadic_encoding} is devoted to the dyadic encoding, a basic transformation on colourings that exploits the hyperbolic structure of our model. Finally in Section~\ref{section:result} we prove the main result.

\section{Sets of colourings of $2$-fold horocyclic tessellations of $\H^2$}
\label{section:tilings_hyperbolic_plane}

In this section we present a formalism to define tilings on one particular family of tessellations of the hyperbolic plane $\H^2$: the $2$-fold horocyclic tessellations. We then consider generalized tilings, called sets of colourings, where local rules that define allowed configurations are not necessarily finite in number.

\subsection{$2$-fold horocyclic tessellations of $\H^2$}

In Figure~\ref{figure:semi_group_H2}, we show one \emph{$2$-fold horocyclic tessellation of $\H^2$}, depicted in the upper-half plane model of the hyperbolic plane. The tiles are arranged hierarchically, each sitting above two other tiles. Basically, one row fixes all rows below it, and there are two choices (right and left) for the row immediately above it. There are consequently uncountably many tessellations of $\H^2$ with these tiles, but in the sequel we will work with only one of them. Note that if the sequence of choices is eventually constant (i.e. always right or always left) then the tessellation contains a vertical fracture line, that separates the tessellation into two symmetrical parts. For more convenience, we will allow ourselves to locate tiles by a finite word on the alphabet $\{\alpha,\alpha^{-1},\beta,\beta^{-1} \}$. To do so, we choose to represent the tile that covers the origin point of $\H^2$ by the empty word $\varepsilon$. Then if a tile is represented by the word $g$, its bottom 
left neighbour (resp. bottom right neighbour) will be represented by $g\cdot\alpha$ (resp. $g\cdot\alpha\cdot\beta$). This rule allows to represent every tile below the tile $\varepsilon$ by a word, the rest of the tessellation is obtained by using the trivial rules $\alpha\cdot\alpha^{-1}=\beta\cdot\beta^{-1}=\varepsilon$. In order to make these 
representations consistent we also need to add the relation $\alpha\cdot\beta^2=\beta\cdot\alpha$. Note that to one tile correspond infinitely many finite words, but not all finite words correspond to a tile in the tessellation -- for instance the word $\alpha^{-1}$ does not correspond to any tile for the tessellation with choice of $\varepsilon$ presented in Figure~\ref{figure:semi_group_H2}. The set of non-valid words of course depends on the choice for $\varepsilon$, but it is always recursive.%rational ?

\vspace{-0.5cm}
\begin{figure}[!h]
\centering
\begin{tikzpicture}[scale=0.5]

% zero line
\begin{scope}[shift={(0,2)},scale=2]
\draw (0,0) -- (0,2) -- (0,2) to [controls=+(45:1.5) and +(135:1.5)] (4,2) -- (4,0) to [controls=+(135:0.75) and +(45:0.75)] (2,0) to [controls=+(135:0.75) and +(45:0.75)] (0,0) -- cycle ;
\draw (2,0.75) node[above]{\large$\beta^{-3}\alpha^{-1}$};
\begin{scope}[shift={(4,0)}]
\draw (0,0) -- (0,2) -- (0,2) to [controls=+(45:1.5) and +(135:1.5)] (4,2) -- (4,0) to [controls=+(135:0.75) and +(45:0.75)] (2,0) to [controls=+(135:0.75) and +(45:0.75)] (0,0) -- cycle ;
\draw (2,0.75) node[above]{\large$\beta^{-1}\alpha^{-1}$};
\end{scope}
\begin{scope}[shift={(8,0)}]
\draw (0,0) -- (0,2) -- (0,2) to [controls=+(45:1.5) and +(135:1.5)] (4,2) -- (4,0) to [controls=+(135:0.75) and +(45:0.75)] (2,0) to [controls=+(135:0.75) and +(45:0.75)] (0,0) -- cycle ;
\draw (2,0.75) node[above]{\large$\beta\alpha^{-1}$};
\end{scope}
\end{scope}

% first line
\draw (0,0) -- (0,2) -- (0,2) to [controls=+(45:1.5) and +(135:1.5)] (4,2) -- (4,0) to [controls=+(135:0.75) and +(45:0.75)] (2,0) to [controls=+(135:0.75) and +(45:0.75)] (0,0) -- cycle ;
\draw (2,0.75) node[above]{\large{$\beta^{-3}$}};
\begin{scope}[shift={(4,0)}]
\draw (0,0) -- (0,2) -- (0,2) to [controls=+(45:1.5) and +(135:1.5)] (4,2) -- (4,0) to [controls=+(135:0.75) and +(45:0.75)] (2,0) to [controls=+(135:0.75) and +(45:0.75)] (0,0) -- cycle ;
\draw (2,0.75) node[above]{\large{$\beta^{-2}$}};
\end{scope}
\begin{scope}[shift={(8,0)}]
\draw (0,0) -- (0,2) -- (0,2) to [controls=+(45:1.5) and +(135:1.5)] (4,2) -- (4,0) to [controls=+(135:0.75) and +(45:0.75)] (2,0) to [controls=+(135:0.75) and +(45:0.75)] (0,0) -- cycle ;
\draw (2,0.75) node[above]{\large{$\beta^{-1}$}};
\end{scope}
\begin{scope}[shift={(12,0)}]
\draw (0,0) -- (0,2) -- (0,2) to [controls=+(45:1.5) and +(135:1.5)] (4,2) -- (4,0) to [controls=+(135:0.75) and +(45:0.75)] (2,0) to [controls=+(135:0.75) and +(45:0.75)] (0,0) -- cycle ;
\draw (2,0.75) node[above]{\large{$\varepsilon$}};
\end{scope}
\begin{scope}[shift={(16,0)}]
\draw (0,0) -- (0,2) -- (0,2) to [controls=+(45:1.5) and +(135:1.5)] (4,2) -- (4,0) to [controls=+(135:0.75) and +(45:0.75)] (2,0) to [controls=+(135:0.75) and +(45:0.75)] (0,0) -- cycle ;
\draw (2,0.75) node[above]{\large{$\beta$}};
\end{scope}
\begin{scope}[shift={(20,0)}]
\draw (0,0) -- (0,2) -- (0,2) to [controls=+(45:1.5) and +(135:1.5)] (4,2) -- (4,0) to [controls=+(135:0.75) and +(45:0.75)] (2,0) to [controls=+(135:0.75) and +(45:0.75)] (0,0) -- cycle ;
\draw (2,0.75) node[above]{\large{$\beta^2$}};
\end{scope}

% second line
\begin{scope}[shift={(0,-1)},scale=0.5]
\draw (0,0) -- (0,2) -- (0,2) to [controls=+(45:1.5) and +(135:1.5)] (4,2) -- (4,0) to [controls=+(135:0.75) and +(45:0.75)] (2,0) to [controls=+(135:0.75) and +(45:0.75)] (0,0) -- cycle ;
\draw (2,1.5) node{$\alpha\beta^{-6}$};
\begin{scope}[shift={(4,0)}]
\draw (0,0) -- (0,2) -- (0,2) to [controls=+(45:1.5) and +(135:1.5)] (4,2) -- (4,0) to [controls=+(135:0.75) and +(45:0.75)] (2,0) to [controls=+(135:0.75) and +(45:0.75)] (0,0) -- cycle ;
\draw (2,1.5) node{$\alpha\beta^{-5}$};
\end{scope}
\begin{scope}[shift={(8,0)}]
\draw (0,0) -- (0,2) -- (0,2) to [controls=+(45:1.5) and +(135:1.5)] (4,2) -- (4,0) to [controls=+(135:0.75) and +(45:0.75)] (2,0) to [controls=+(135:0.75) and +(45:0.75)] (0,0) -- cycle ;
\draw (2,1.5) node{$\alpha\beta^{-4}$};
\end{scope}
\begin{scope}[shift={(12,0)}]
\draw (0,0) -- (0,2) -- (0,2) to [controls=+(45:1.5) and +(135:1.5)] (4,2) -- (4,0) to [controls=+(135:0.75) and +(45:0.75)] (2,0) to [controls=+(135:0.75) and +(45:0.75)] (0,0) -- cycle ;
\draw (2,1.5) node{$\alpha\beta^{-3}$};
\end{scope}
\begin{scope}[shift={(16,0)}]
\draw (0,0) -- (0,2) -- (0,2) to [controls=+(45:1.5) and +(135:1.5)] (4,2) -- (4,0) to [controls=+(135:0.75) and +(45:0.75)] (2,0) to [controls=+(135:0.75) and +(45:0.75)] (0,0) -- cycle ;
\draw (2,1.5) node{$\alpha\beta^{-2}$};
\end{scope}
\begin{scope}[shift={(20,0)}]
\draw (0,0) -- (0,2) -- (0,2) to [controls=+(45:1.5) and +(135:1.5)] (4,2) -- (4,0) to [controls=+(135:0.75) and +(45:0.75)] (2,0) to [controls=+(135:0.75) and +(45:0.75)] (0,0) -- cycle ;
\draw (2,1.5) node{$\alpha\beta^{-1}$};
\end{scope}
\begin{scope}[shift={(24,0)}]
\draw (0,0) -- (0,2) -- (0,2) to [controls=+(45:1.5) and +(135:1.5)] (4,2) -- (4,0) to [controls=+(135:0.75) and +(45:0.75)] (2,0) to [controls=+(135:0.75) and +(45:0.75)] (0,0) -- cycle ;
\draw (2,1.5) node{$\alpha$};
\end{scope}
\begin{scope}[shift={(28,0)}]
\draw (0,0) -- (0,2) -- (0,2) to [controls=+(45:1.5) and +(135:1.5)] (4,2) -- (4,0) to [controls=+(135:0.75) and +(45:0.75)] (2,0) to [controls=+(135:0.75) and +(45:0.75)] (0,0) -- cycle ;
\draw (2,1.5) node{$\alpha\beta$};
\end{scope}
\begin{scope}[shift={(32,0)}]
\draw (0,0) -- (0,2) -- (0,2) to [controls=+(45:1.5) and +(135:1.5)] (4,2) -- (4,0) to [controls=+(135:0.75) and +(45:0.75)] (2,0) to [controls=+(135:0.75) and +(45:0.75)] (0,0) -- cycle ;
\draw (2,1.5) node{$\alpha\beta^2$};
\end{scope}
\begin{scope}[shift={(36,0)}]
\draw (0,0) -- (0,2) -- (0,2) to [controls=+(45:1.5) and +(135:1.5)] (4,2) -- (4,0) to [controls=+(135:0.75) and +(45:0.75)] (2,0) to [controls=+(135:0.75) and +(45:0.75)] (0,0) -- cycle ;
\draw (2,1.5) node{$\alpha\beta^3$};
\end{scope}
\begin{scope}[shift={(40,0)}]
\draw (0,0) -- (0,2) -- (0,2) to [controls=+(45:1.5) and +(135:1.5)] (4,2) -- (4,0) to [controls=+(135:0.75) and +(45:0.75)] (2,0) to [controls=+(135:0.75) and +(45:0.75)] (0,0) -- cycle ;
\draw (2,1.5) node{$\alpha\beta^4$};
\end{scope}
\begin{scope}[shift={(44,0)}]
\draw (0,0) -- (0,2) -- (0,2) to [controls=+(45:1.5) and +(135:1.5)] (4,2) -- (4,0) to [controls=+(135:0.75) and +(45:0.75)] (2,0) to [controls=+(135:0.75) and +(45:0.75)] (0,0) -- cycle ;
\draw (2,1.5) node{$\alpha\beta^5$};
\end{scope}
\end{scope}

\end{tikzpicture}

\caption{One $2$-fold horocyclic tessellation of $\H^2$, and names given to different tiles.}
\label{figure:semi_group_H2}
\end{figure}
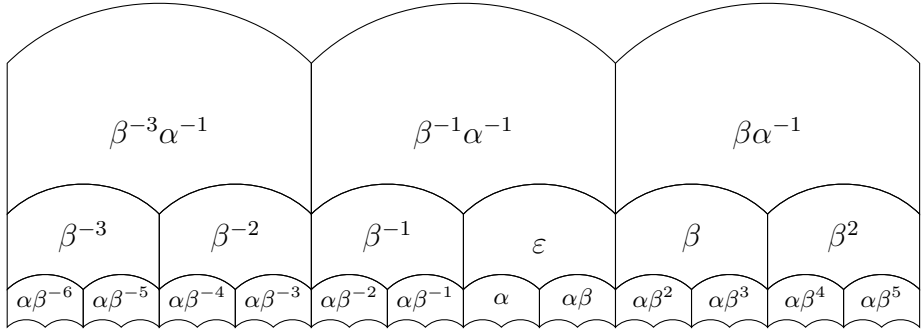
\vspace{-0.5cm}

\subsection{Sets of colourings on $\H^2$}

Suppose the origin $\varepsilon$ is fixed, and let $g\in\{\alpha,\alpha^{-1},\beta,\beta^{-1}\}^*$ be a position in $\H^2$. We denote by $\U_n$ the \emph{support of size $n$}, defined as
$$\U_n=\left\{ \alpha^p\cdot\beta^q :0\leq p\leq n-1,0\leq q\leq 2^p-1\right\},$$
and by $\L_n$ \emph{linear support of size $n$}, defined as
$$\L_n=\left\{ \alpha^{n+1}\cdot\beta^q :0\leq q\leq 2^{n+1}-1 \right\}.$$

Let $A$ be a finite alphabet. A \emph{configuration} is a colouring of the $2$-fold horocyclic tessellation of $\H^2$ with colours chosen in $A$. By abuse of notation, we denote the set of configurations by $A^{\H^2}$. A \define{pattern of size $n$} is a finite configuration $p\in A^{\U_n}$, and $\U_n$ is thus called the \define{support of $p$}. A \define{linear pattern of size $n$} is a finite configuration $p\in A^{\L_n}$.

\begin{figure}[!h]
\centering
\begin{tikzpicture}[scale=0.3]

%% Pattern V_1 
% first line
\draw (0,0) -- (0,2) -- (0,2) to [controls=+(45:1.5) and +(135:1.5)] (4,2) -- (4,0) to [controls=+(135:0.75) and +(45:0.75)] (2,0) to [controls=+(135:0.75) and +(45:0.75)] (0,0) -- cycle ;
% \draw[thick] (0,1.8) to [controls=+(45:1.5) and +(135:1.5)] (4,1.8);

\begin{scope}[shift={(6,1)}]
%% Pattern V_2 
% first line
\draw (0,0) -- (0,2) -- (0,2) to [controls=+(45:1.5) and +(135:1.5)] (4,2) -- (4,0) to [controls=+(135:0.75) and +(45:0.75)] (2,0) to [controls=+(135:0.75) and +(45:0.75)] (0,0) -- cycle ;
% \draw[thick] (0,1.8) to [controls=+(45:1.5) and +(135:1.5)] (4,1.8);
% second line
\begin{scope}[shift={(0,-1)},scale=0.5]
\draw (0,0) -- (0,2) -- (0,2) to [controls=+(45:1.5) and +(135:1.5)] (4,2) -- (4,0) to [controls=+(135:0.75) and +(45:0.75)] (2,0) to [controls=+(135:0.75) and +(45:0.75)] (0,0) -- cycle ;
% \draw (2,1.5) node{$a^*$};
\begin{scope}[shift={(4,0)}]
\draw (0,0) -- (0,2) -- (0,2) to [controls=+(45:1.5) and +(135:1.5)] (4,2) -- (4,0) to [controls=+(135:0.75) and +(45:0.75)] (2,0) to [controls=+(135:0.75) and +(45:0.75)] (0,0) -- cycle ;
% \draw (2,1.5) node{$a^*$};
\end{scope}
\end{scope}
\end{scope}

\begin{scope}[shift={(12,1.5)}]
%% Pattern V_3 
% first line
\draw (0,0) -- (0,2) -- (0,2) to [controls=+(45:1.5) and +(135:1.5)] (4,2) -- (4,0) to [controls=+(135:0.75) and +(45:0.75)] (2,0) to [controls=+(135:0.75) and +(45:0.75)] (0,0) -- cycle ;
% \draw[thick] (0,1.8) to [controls=+(45:1.5) and +(135:1.5)] (4,1.8);
% second line
\begin{scope}[shift={(0,-1)},scale=0.5]
\draw (0,0) -- (0,2) -- (0,2) to [controls=+(45:1.5) and +(135:1.5)] (4,2) -- (4,0) to [controls=+(135:0.75) and +(45:0.75)] (2,0) to [controls=+(135:0.75) and +(45:0.75)] (0,0) -- cycle ;
% \draw (2,1.5) node{$a^*$};
\begin{scope}[shift={(4,0)}]
\draw (0,0) -- (0,2) -- (0,2) to [controls=+(45:1.5) and +(135:1.5)] (4,2) -- (4,0) to [controls=+(135:0.75) and +(45:0.75)] (2,0) to [controls=+(135:0.75) and +(45:0.75)] (0,0) -- cycle ;
% \draw (2,1.5) node{$a^*$};
\end{scope}
\end{scope}
% third line
\begin{scope}[shift={(0,-1.5)},scale=0.25]
\draw (0,0) -- (0,2) -- (0,2) to [controls=+(45:1.5) and +(135:1.5)] (4,2) -- (4,0) to [controls=+(135:0.75) and +(45:0.75)] (2,0) to [controls=+(135:0.75) and +(45:0.75)] (0,0) -- cycle ;
\begin{scope}[shift={(4,0)}]
\draw (0,0) -- (0,2) -- (0,2) to [controls=+(45:1.5) and +(135:1.5)] (4,2) -- (4,0) to [controls=+(135:0.75) and +(45:0.75)] (2,0) to [controls=+(135:0.75) and +(45:0.75)] (0,0) -- cycle ;
\end{scope}
\begin{scope}[shift={(8,0)}]
\draw (0,0) -- (0,2) -- (0,2) to [controls=+(45:1.5) and +(135:1.5)] (4,2) -- (4,0) to [controls=+(135:0.75) and +(45:0.75)] (2,0) to [controls=+(135:0.75) and +(45:0.75)] (0,0) -- cycle ;
\end{scope}
\begin{scope}[shift={(12,0)}]
\draw (0,0) -- (0,2) -- (0,2) to [controls=+(45:1.5) and +(135:1.5)] (4,2) -- (4,0) to [controls=+(135:0.75) and +(45:0.75)] (2,0) to [controls=+(135:0.75) and +(45:0.75)] (0,0) -- cycle ;
\end{scope}
\end{scope}
\end{scope}

\begin{scope}[shift={(18,0)}] 
%% Pattern U_3 
% first line
\draw (0,0) -- (0,2) -- (0,2) to [controls=+(45:1.5) and +(135:1.5)] (4,2) -- (4,0) to [controls=+(135:0.75) and +(45:0.75)] (2,0) to [controls=+(135:0.75) and +(45:0.75)] (0,0) -- cycle ; 
% \draw[thick] (0,1.8) to [controls=+(45:1.5) and +(135:1.5)] (4,1.8);
\begin{scope}[shift={(4,0)}]
\draw (0,0) -- (0,2) -- (0,2) to [controls=+(45:1.5) and +(135:1.5)] (4,2) -- (4,0) to [controls=+(135:0.75) and +(45:0.75)] (2,0) to [controls=+(135:0.75) and +(45:0.75)] (0,0) -- cycle ;
% \draw[thick] (0,1.8) to [controls=+(45:1.5) and +(135:1.5)] (4,1.8);
\end{scope}
\begin{scope}[shift={(8,0)}]
\draw (0,0) -- (0,2) -- (0,2) to [controls=+(45:1.5) and +(135:1.5)] (4,2) -- (4,0) to [controls=+(135:0.75) and +(45:0.75)] (2,0) to [controls=+(135:0.75) and +(45:0.75)] (0,0) -- cycle ;
% \draw[thick] (0,1.8) to [controls=+(45:1.5) and +(135:1.5)] (4,1.8);
\end{scope} 
\begin{scope}[shift={(12,0)}]
\draw (0,0) -- (0,2) -- (0,2) to [controls=+(45:1.5) and +(135:1.5)] (4,2) -- (4,0) to [controls=+(135:0.75) and +(45:0.75)] (2,0) to [controls=+(135:0.75) and +(45:0.75)] (0,0) -- cycle ;
% \draw[thick] (0,1.8) to [controls=+(45:1.5) and +(135:1.5)] (4,1.8);
\end{scope} 
\end{scope}
 
\end{tikzpicture}

\caption{The supports $\U_0$, $\U_1$, $\U_2 $ and $\L_1$.} 
\label{figure:basic_supports}
\end{figure}

We say that a pattern $p\in A^{\U_n}$ \emph{appears} in a configuration $x\in A^{\H^2}$ if there exists some position $g\in\{ \alpha,\alpha^{-1},\beta,\beta^{-1}\}^*$ such that $p=x_{|g\cdot\U_n}$. Let $F$ be a set of patterns, it defines a \emph{set of colourings} $\Sigma_F\subseteq A^{\H^2}$ as the set of configurations that avoid every pattern in F
$$\Sigma_F=\left\{ x\in A^{\H^2}:\text{ no pattern of }F\text{ appears in x}\right\}.$$

This notion of set of colourings is very close to the classical notion of subshift in symbolic dynamics -- at least from a combinatorial point of view -- but in the case of the $2$-fold horocyclic tessellation, we lack a real shift action to properly define subshifts as dynamical objects.

% A pattern $p\in A^{\V_n}$ for some $n\in\N^*$ is called a \define{column pattern}, and a pattern $p\in A^{\U_n}$ for some $n\in\N^*$ is called a \define{square pattern}. For more convenience, if $p$ is a pattern with support $\V_2$, we will denote it by $\widetilde{(p_\varepsilon,(p_\alpha,p_{\alpha\cdot\beta}))}$, and if $q$ is a pattern with support $\sigma^{\omega}(\V_2)$ for some $\omega=\beta\cdot\omega'\in\H^2$ -- i.e. not on the top border of $\H^2$ -- we will denote it by $(q_\omega,(q_{\omega\cdot\alpha},q_{\omega\cdot\alpha\cdot\beta}))$. The \define{language of a subshift $\Sigma\subseteq A^{\H^2}$}, denoted by $\Lang(\Sigma)$, is the set of square patterns that appear in some configuration $x\in\Sigma$.

\begin{definition}
A set of colourings $\Sigma\subseteq A^{\H^2}$ is
\begin{enumerate}
 \item of \emph{finite type} (CFT) if there exists a finite set of forbidden patterns that defines it;
 \item \emph{sofic} if there exists a CFT $\Sigma'\subseteq B^{\H^2}$ and a letter-to-letter map $\Phi:B\rightarrow A$ such that $\Sigma=\Phi(\Sigma')$;
 \item \emph{effective} if there exists a recursively enumerable set of forbidden patterns that defines it.
\end{enumerate}
\end{definition}

\subsection{Computation of Turing machine inside a CFT}

The idea is to embed Turing machine computations the same way it is usually done in $\Z^2$. In this aim, we encode by local rules the lattice $\Z\times\N$ inside $\H^2$ (see the tiles marked by a $\bullet$ symbol in Figure~\ref{figure.MT_hyperbolic}). We denote this set of colourings by $\Sigma_{\Z\times\N}\subset\{\bullet,\emptyset\}^{\H^2}$. Obviously, $\Sigma_{\Z\times\N}$ is CFT, but contains the uniform configuration $\emptyset^{\H^2}$ if we do not impose more constraints.

% \begin{figure}[ht!]
% \centering
% \include{figures/figure:lattice_hyperbolic_plane}
% \caption{The lattice $\Z\times\N$ encoded by local rules on the alphabet $\{\bullet,\emptyset\}$ inside $\H^2$. Cells belonging to the lattice are marked by a symbol $\bullet$. This tiling is totally defined by the three allowed patterns $\widetilde{(\bullet,(\bullet,\emptyset))}$, $(\bullet,(\bullet,\emptyset))$ and $(\emptyset,(\emptyset,\emptyset))$. For more readability the symbol $\emptyset$ is not pictured.} 
% \label{figure:lattice_hyperbolic_plane}
% \end{figure}

Once one has this lattice, it is possible to encode the behaviour of any given Turing machine by local rules (see again Figure~\ref{figure.MT_hyperbolic}). Remember that a Turing machine is a model of calculation composed by a finite automaton -- the computation head -- that can be in different states and moves on an infinite tape divided into boxes, each box containing a letter that can be modified by the head.

% \todo{ne pas ecrire en entier la definition mais plutot pointer vers un article, livre ?}
% \begin{definition}
% Let $\mathcal{M}=(Q,\Gamma,\sharp,q_0,\delta,Q_F)$ be a Turing machine, where:
% \begin{itemize}
% \item $Q$ is a finite set of states of the computation head; $q_0\in Q$ is the initial state;
% \item $\Gamma$ is a finite alphabet;
% \item $\sharp\notin\Gamma$ is the blank symbol, with which the tape is initially filled;
% % , and $\parallel\notin\Gamma$ is the separation symbol which is written at the end of any enumerated word;
% \item $\delta : Q\times\Gamma\to Q\times\Gamma\times\{\leftarrow,\rightarrow\}$ is the transition function. Given the state of the computation head and the letter it can read on the tape --- which thus depends on the position of the computation head on the tape --- the letter on the tape is replaced or not by another one, the computation head moves to an adjacent box and changes or not of state;
% \item $F\subset Q_F$ is the set of final states. %--- when a final state is reached, the calculation stops.
% \end{itemize}
% \end{definition}

We do not give in details the alphabet $A_{\mathcal{M}}$ used to encode computations of a given Turing machine $\mathcal{M}$, but all letters in $A_{\mathcal{M}}$ and local rules appear in Figure~\ref{figure.MT_hyperbolic}. The idea is basically to use the encoding $\Sigma_{\Z\times\N}$ of the lattice $\Z\times\N$ and to adapt the classical construction by Wang~\cite{Wang1961} to make sure that we get the space-time diagram of the Turing machine $\mathcal{M}$ starting on the empty word. When we fix a $\bullet$-line in $\H^2$, where the initial state $q_0$ appears once on a blank tape -- it is the same as just fixing an initial seed tile --, one can easily get a computation zone which is infinite both in space and time (see Figure~\ref{figure.MT_hyperbolic}). 

Without a lot of effort, one can deduce from this construction the undecidability of the origin constrained tiling problem in the hyperbolic plane (see~\cite{Robinson1978} for the original proof). The same result without the origin constrained assumption was proved only thirty years later by two different techniques (see~\cite{DBLP:conf/mcu/Kari07} for a proof on the 2-fold horocyclic tessellation of the hyperbolic plane and~\cite{Margenstern2008} for a proof on $(7,3)$-tessellation of Poincar\'e disc with heptagons).

\begin{figure}%[H]
\centering
\begin{adjustbox}{
  addcode={\begin{minipage}{\width}}{\caption{We consider the Turing machine $\MT_\texttt{ex}$ that enumerates on its tape the words $ab,aabb,aaabbb,\dots$ and never halts. This machine uses the three letters alphabet $\{a,b,\parallel \}$ and five states $Q=\{ q_0, q_\texttt{a+}, q_\texttt{b+}, q_\texttt{b++},q_\parallel \}$. A separation symbol $\parallel$ is written at the end of each $a^n b^n$. On the top, an example of computation encoded inside a $4\times4$-grid in $\Z^2$. On the bottom, the same computation encoded inside a $4\times4$-grid in the hyperbolic plane: the grid is marked by $\bullet$ symbols.}\label{figure.MT_hyperbolic}\end{minipage}},rotate=90,center}
% [inline block 0: 1 envs, 22653 chars -> data_tex | \begin{tikzpicture}[scale=0.7] ...]

\end{adjustbox}
\end{figure}

\section{Dyadic encoding in $2$-fold horocyclic tessellations of $\H^2$} 
\label{section:dyadic_encoding} 
 
In this section we present a transformation on subshifts, that preserve both soficness and the property of being of finite type, that is based on the observation that every row of the hyperbolic half-plane contains twice as many cells as the row directly above it.

\subsection{Encoding on a single row all rows above it}

We define a global function on configurations $\Phi$ that doubles the alphabet

$$
\Phi\colon A^{\H^2} \rightarrow \left(A\times A\right)^{\H^2} \cup \mathcal{P}_{|A|}\left(\left(A\times A\right)^{\H^2}\right),
$$
where $\mathcal{P}_{n}(X)$ stands for subsets of size $n$ of $S$, and that will be given by a very simple local rule. Note that in the case where the tessellation chosen contains no fracture line, then the function $\Phi$ is simply defined as  $\Phi\colon A^{\H^2} \rightarrow \left(A\times A\right)^{\H^2}$. If the tessellation chosen contains a fracture line, then there will be one undetermined symbol (hence the cardinality $|A|$ in the definition of $\Phi$) that will propagate along the fracture line, and which can be interpreted as coming from the infinity.

Given a configuration $x\in A^{\H^2}$, we define $\Phi(x)$ as the configuration of $\left(A\times A\right)^{\H^2}$ such that its restriction to the first letter gives $x$, and the restriction to the second letter is given by the local rule pictured on Figure~\ref{figure:local_rule_dyadic_encoding}. Formally, for any position $g$ in $\H^2$, one has

\begin{align*}
\pi_1(\Phi(x)) &= x \\
\pi_2(\Phi(x))_{g\alpha} &= \pi_1(\Phi(x))_{g} \\
\pi_2(\Phi(x))_{g\alpha\beta} &= \pi_2(\Phi(x))_{g}\text{ for every }x\in A^{\H^2}
\end{align*}

where $\pi_i$ denotes the projection on the $i^\textrm{th}$ letter for $i\in\{1,2\}$. Consequently for all $n\geq 1$ one has  $$x_g=\pi_1(\Phi(x))_{g}=\pi_2(\Phi(x))_{g\alpha\left(\alpha\beta\right)^{n-1}}=\pi_2(\Phi(x))_{g\alpha^n\beta^{2^{n-1}-1}},$$
in other terms, to find a symbol of $x$ in $\Phi(x)$ $n$ rows below, one can either go to the row immediately below on the left ($\alpha$) and then $n-1$ times to the row below on the right ($\left(\alpha\beta\right)^{n-1}$), or first go $n$ row below on the left ($\alpha^n$) and then $2^{n-1}-1$ to the right ($\beta^{2^{n-1}-1}$).

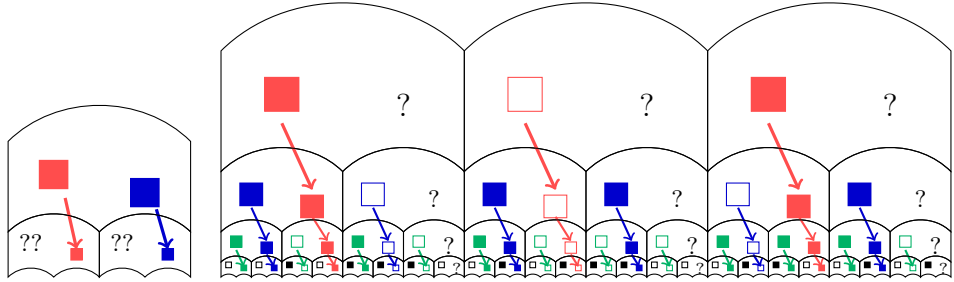
\begin{figure}[H]
\centering
\begin{tikzpicture}[scale=0.2]

\begin{scope}[scale=0.75]
%Level 0
\hypertiledyadic{0}{0}{4}{\bnoirrouge{1.25}}{\bnoirbleu{1.25}}
% %Level 1
\draw[very thick,color=rouge,->] (5,3) -> (6,-1.5);
\hypertiledyadic{0}{-4}{2}{??}{\bnoirrouge{0.5}}
\draw[very thick,color=bleu,->] (13,2) -> (14,-1.5);
\hypertiledyadic{8}{-4}{2}{??}{\bnoirbleu{0.5}}
\end{scope}

\begin{scope}[shift={(14,0)}] 
%Level 0
\hypertiledyadic{0}{4}{4}{\bnoirrouge{1.5}}{\large{?}}
\fleche{0}{4}{4}{rouge}{very thick}
\hypertiledyadic{16}{4}{4}{\bblancrouge{1.5}}{\large{?}}
\fleche{16}{4}{4}{rouge}{very thick}
\hypertiledyadic{32}{4}{4}{\bnoirrouge{1.5}}{\large{?}}
\fleche{32}{4}{4}{rouge}{very thick}

%Level 1
\hypertiledyadic{0}{0}{2}{\bnoirbleu{1}}{\bnoirrouge{1}}
\flechebis{0}{0}{2}{rouge}{thick}
\hypertiledyadic{8}{0}{2}{\bblancbleu{1}}{?}
\hypertiledyadic{16}{0}{2}{\bnoirbleu{1}}{\bblancrouge{1}}
\flechebis{16}{0}{2}{rouge}{thick}
\hypertiledyadic{24}{0}{2}{\bnoirbleu{1}}{?}
\hypertiledyadic{32}{0}{2}{\bblancbleu{1}}{\bnoirrouge{1}}
\flechebis{32}{0}{2}{rouge}{thick}
\hypertiledyadic{40}{0}{2}{\bnoirbleu{1}}{?}

\fleche{0}{0}{2}{bleu}{thick}
\fleche{8}{0}{2}{bleu}{thick}
\fleche{16}{0}{2}{bleu}{thick}
\fleche{24}{0}{2}{bleu}{thick}
\fleche{32}{0}{2}{bleu}{thick}
\fleche{40}{0}{2}{bleu}{thick}

%Level 2
\hypertiledyadic{0}{-2}{1}{\bnoirvert{0.5}}{\bnoirbleu{0.5}}
\flechebis{0}{-2}{1}{bleu}{thick}
\hypertiledyadic{4}{-2}{1}{\bblancvert{0.5}}{\bnoirrouge{0.5}}
\flechebis{4}{-2}{1}{rouge}{thick}
\hypertiledyadic{8}{-2}{1}{\bnoirvert{0.5}}{\bblancbleu{0.5}}
\flechebis{8}{-2}{1}{bleu}{thick}
\hypertiledyadic{12}{-2}{1}{\bblancvert{0.5}}{\footnotesize{?}}
\hypertiledyadic{16}{-2}{1}{\bnoirvert{0.5}}{\bnoirbleu{0.5}}
\flechebis{16}{-2}{1}{bleu}{thick}
\hypertiledyadic{20}{-2}{1}{\bblancvert{0.5}}{\bblancrouge{0.5}}
\flechebis{20}{-2}{1}{rouge}{thick}
\hypertiledyadic{24}{-2}{1}{\bblancvert{0.5}}{\bnoirbleu{0.5}}
\flechebis{24}{-2}{1}{bleu}{thick}
\hypertiledyadic{28}{-2}{1}{\bblancvert{0.5}}{\footnotesize{?}}
\hypertiledyadic{32}{-2}{1}{\bnoirvert{0.5}}{\bblancbleu{0.5}}
\flechebis{32}{-2}{1}{bleu}{thick}
\hypertiledyadic{36}{-2}{1}{\bnoirvert{0.5}}{\bnoirrouge{0.5}}
\flechebis{36}{-2}{1}{rouge}{thick}
\hypertiledyadic{40}{-2}{1}{\bnoirvert{0.5}}{\bnoirbleu{0.5}}
\flechebis{40}{-2}{1}{bleu}{thick}
\hypertiledyadic{44}{-2}{1}{\bblancvert{0.5}}{\footnotesize{?}}

\fleche{0}{-2}{1}{vert}{thick}
\fleche{4}{-2}{1}{vert}{thick}
\fleche{8}{-2}{1}{vert}{thick}
\fleche{12}{-2}{1}{vert}{thick}
\fleche{16}{-2}{1}{vert}{thick}
\fleche{20}{-2}{1}{vert}{thick}
\fleche{24}{-2}{1}{vert}{thick}
\fleche{28}{-2}{1}{vert}{thick}
\fleche{32}{-2}{1}{vert}{thick}
\fleche{36}{-2}{1}{vert}{thick}
\fleche{40}{-2}{1}{vert}{thick}
\fleche{44}{-2}{1}{vert}{thick}

%Level 3
\hypertiledyadic{0}{-3}{0.5}{\bblanc{0.25}}{\bnoirvert{0.25}}
\hypertiledyadic{2}{-3}{0.5}{\bblanc{0.25}}{\bnoirbleu{0.25}}
\hypertiledyadic{4}{-3}{0.5}{\bnoir{0.25}}{\bblancvert{0.25}}
\hypertiledyadic{6}{-3}{0.5}{\bblanc{0.25}}{\bnoirrouge{0.25}}
\hypertiledyadic{8}{-3}{0.5}{\bnoir{0.25}}{\bnoirvert{0.25}}
\hypertiledyadic{10}{-3}{0.5}{\bnoir{0.25}}{\bblancbleu{0.25}}
\hypertiledyadic{12}{-3}{0.5}{\bnoir{0.25}}{\bblancvert{0.25}}
\hypertiledyadic{14}{-3}{0.5}{\bblanc{0.25}}{\tiny{?}}
\hypertiledyadic{16}{-3}{0.5}{\bblanc{0.25}}{\bnoirvert{0.25}}
\hypertiledyadic{18}{-3}{0.5}{\bnoir{0.25}}{\bnoirbleu{0.25}}
\hypertiledyadic{20}{-3}{0.5}{\bblanc{0.25}}{\bblancvert{0.25}}
\hypertiledyadic{22}{-3}{0.5}{\bnoir{0.25}}{\bblancrouge{0.25}}
\hypertiledyadic{24}{-3}{0.5}{\bnoir{0.25}}{\bblancvert{0.25}}
\hypertiledyadic{26}{-3}{0.5}{\bnoir{0.25}}{\bnoirbleu{0.25}}
\hypertiledyadic{28}{-3}{0.5}{\bblanc{0.25}}{\bblancvert{0.25}}
\hypertiledyadic{30}{-3}{0.5}{\bblanc{0.25}}{\tiny{?}}
\hypertiledyadic{32}{-3}{0.5}{\bblanc{0.25}}{\bnoirvert{0.25}}
\hypertiledyadic{34}{-3}{0.5}{\bnoir{0.25}}{\bblancbleu{0.25}}
\hypertiledyadic{36}{-3}{0.5}{\bnoir{0.25}}{\bnoirvert{0.25}}
\hypertiledyadic{38}{-3}{0.5}{\bblanc{0.25}}{\bnoirrouge{0.25}}
\hypertiledyadic{40}{-3}{0.5}{\bblanc{0.25}}{\bnoirvert{0.25}}
\hypertiledyadic{42}{-3}{0.5}{\bnoir{0.25}}{\bnoirbleu{0.25}}
\hypertiledyadic{44}{-3}{0.5}{\bblanc{0.25}}{\bblancvert{0.25}}
\hypertiledyadic{46}{-3}{0.5}{\bnoir{0.25}}{\tiny{?}}

\end{scope}

\end{tikzpicture}
\caption{On the left: the local rules from which one can deduce the configuration $\Phi(x)$ from the configuration $x$. On the right: an example of configuration $\Phi(x)$ on the alphabet $\{\blacksquare,\square\}$ -- coloured version to better visualize the local rule.}
\label{figure:local_rule_dyadic_encoding}
\end{figure}

For a given subshift $\Sigma\subseteq A^{\H^2}$, the subshift $\Phi(\Sigma)\subseteq \left(A\times A\right)^{\H^2}$ is called the \define{dyadic encoding of $\Sigma$}. Obviously if $\Sigma$ is SFT (resp. sofic, effective), then so is $\Phi(\Sigma)$.

\subsection{Detecting patterns}
\label{subsection.detecting_patterns}

In this section we describe more precisely how the transformation $\Phi$ acts on allowed and forbidden patterns of a set of colourings $\Sigma$.
 
Let $p\in A^{\U_n}$, define $\widetilde{p}=\{\pi_2(\Phi(x))_{\L_n} :x\in A^{\H^2}\text{ such that }x_{\U_n}=p\}\subset A^{2^n}$ the set of linear patterns of size $n$ which appear in the bottom of the pattern $p$ after application of $\Phi$. Some letters of an element in $\widetilde{p}$ code letters in $p$, and others code letters that appears in $x$ outside $\U_n$. If there is no ambiguity -- i.e. if it codes a letter in $p$ -- denote $\widetilde{p}_i$ as a letter.
 
\begin{proposition}\label{proposition:ValueBottom}
The pattern $p\in A^{\U_n}$ appears in a configuration $x\in A^{\H^2}$ in position $g$ (i.e. $p=x_{g\cdot\U_n}$) if and only if  an element of $\widetilde{p}$ appears in $\Phi(x)$ in position $g\cdot\L_n$ (i.e. $\pi_2(\Phi(x))_{g\cdot\L_n}\in\widetilde{p}$).
\end{proposition}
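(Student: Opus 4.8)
The plan is to establish the two implications from a single computational identity --- the one already proved in the text,
$$x_g=\pi_2(\Phi(x))_{g\alpha^{n}\beta^{2^{n-1}-1}},$$
together with the remark that the local rule defining $\Phi$ only propagates information \emph{downwards} (the second coordinate at $g\alpha$ copies the first coordinate at $g$, and the second coordinate at $g\alpha\beta$ copies the second coordinate at $g$).

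First I would set up the index bookkeeping. From the relation $\alpha\beta^2=\beta\alpha$ one gets $\beta^q\alpha^m=\alpha^m\beta^{q2^m}$, so for a position $h=\alpha^p\beta^q\in\U_n$ and $m=n+1-p$ one has $h\cdot\alpha^{m}\beta^{2^{m-1}-1}=\alpha^{n+1}\beta^{\,q2^{n+1-p}+2^{n-p}-1}$; call this position $\iota(h)$. A one-line estimate shows $0\le q2^{n+1-p}+2^{n-p}-1<2^{n+1}$ whenever $0\le q\le 2^p-1$, so $\iota(h)\in\L_n$; and since $q2^{n+1-p}+2^{n-p}=2^{n-p}(2q+1)$ has $2$-adic valuation exactly $n-p$ and odd part $2q+1$, the map $\iota\colon\U_n\to\L_n$ is injective. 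Feeding $g'\cdot h$ into the identity above in place of $g$ (and $m$, which is $\ge 2$, in place of $n$) gives, for every configuration $z$ and every position $g'$,
$$z_{g'\cdot h}=\pi_2(\Phi(z))_{g'\cdot\iota(h)}\qquad(h\in\U_n).$$

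The direction $(\Leftarrow)$ is then immediate: if $\pi_2(\Phi(x))_{g\cdot\L_n}\in\widetilde{p}$, pick $y\in A^{\H^2}$ with $y_{\U_n}=p$ and $\pi_2(\Phi(y))_{\L_n}=\pi_2(\Phi(x))_{g\cdot\L_n}$, and apply the last displayed identity once with $z=x,\ g'=g$ and once with $z=y,\ g'=\varepsilon$; since the two linear patterns coincide coordinate by coordinate, $x_{g\cdot h}=\pi_2(\Phi(x))_{g\cdot\iota(h)}=\pi_2(\Phi(y))_{\iota(h)}=y_h=p_h$ for every $h\in\U_n$, that is $x_{g\cdot\U_n}=p$. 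For $(\Rightarrow)$, assume $x_{g\cdot\U_n}=p$. Below any tile the tessellation is a rooted binary tree and the $2$-fold choice only affects what lies \emph{above} a tile, so $h\mapsto g\cdot h$ is an isomorphism from the set of tiles below $\varepsilon$ onto the set of tiles below $g$; let $y\in A^{\H^2}$ be any configuration whose values on the tiles below $\varepsilon$ are $y_h=x_{g\cdot h}$ (if the chosen tessellation carries a fracture line, the single free symbol that $\Phi$ pushes along it is part of this downward data and is transported in the same way). Because $\Phi$'s rule only reads and writes downwards, $\pi_2(\Phi(y))_{\L_n}$ depends only on the restriction of $y$ to the tiles below $\varepsilon$, hence equals $\pi_2(\Phi(x))_{g\cdot\L_n}$; and $y_{\U_n}=x_{g\cdot\U_n}=p$. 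By the very definition of $\widetilde{p}$ this says $\pi_2(\Phi(x))_{g\cdot\L_n}\in\widetilde{p}$.

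The two points that need care are (i) the arithmetic showing that $\iota$ maps $\U_n$ into $\L_n$ and is injective --- routine once $\beta^q\alpha^m=\alpha^m\beta^{q2^m}$ is in hand --- and (ii) the homogeneity used in $(\Rightarrow)$: one must be sure that the downward cones of the fixed tessellation are canonically isomorphic, so that a configuration realizing an arbitrary choice of the ``free'' symbols (the coordinates of $\L_n$ outside the image of $\iota$) while still agreeing with $p$ on $\U_n$ genuinely exists. I expect (ii) to be the only real subtlety; everything else is substitution into the identity recalled from the text.
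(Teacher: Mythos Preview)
The paper states this proposition without proof, so there is no ``paper's own proof'' to compare against; your argument fills in what the authors left implicit, and it is essentially correct.

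Your index map $\iota\colon\U_n\to\L_n$ is the right bookkeeping device: it makes the identity $x_{g'h}=\pi_2(\Phi(x))_{g'\iota(h)}$ do all the work, and your $2$-adic valuation argument for injectivity is clean. The $(\Leftarrow)$ direction is then a two-line substitution, exactly as you wrote it.

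One small imprecision in $(\Rightarrow)$: the assertion that ``$\pi_2(\Phi(y))_{\L_n}$ depends only on the restriction of $y$ to the tiles below $\varepsilon$'' is not literally true. The rightmost coordinate of $\L_n$, namely $\alpha^{n+1}\beta^{2^{n+1}-1}$, is reached from $\varepsilon$ by repeatedly taking right children, so its second $\Phi$-coordinate equals $\pi_2(\Phi(y))_\varepsilon$, which is determined by $y$ \emph{above} $\varepsilon$ (or is the free symbol along a fracture line). This is easy to repair: after fixing $y_h=x_{g\cdot h}$ for $h$ below $\varepsilon$, also choose $y$ above $\varepsilon$ so that $\pi_2(\Phi(y))_\varepsilon=\pi_2(\Phi(x))_g$; this is always possible, since climbing from $\varepsilon$ one either meets a left-child ancestor (set $y$ at its parent to the required letter) or one is on the fracture line (use the free choice). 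You already flag point (ii) as the delicate spot, and this is precisely where the care is needed --- but it is a one-sentence fix, not a gap.
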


% \begin{proof}
% Let $p\in A^{\U_n}$ and consider $\alpha^p\beta^q\in\U_n$ with $0\leq p\leq n$ and $0\leq q\leq 2^p-1$. By construction of $\Phi$, if $x\in A^{\H^2}$ such that $p=x_{\U_n}$ one has 
% $$\pi_2(\Phi(x))_{\alpha^{n+1}\beta^{(2q+1)2^{n-p}}}=\pi_2(\Phi(x))_{\alpha^p\beta^q\alpha^{n+1-p}\beta^{2^{n-p}}}=\pi_1(\Phi(x))_{\alpha^p\beta^q}.$$
% The proposition follows.
% \end{proof}

Proposition~\ref{proposition:ValueBottom} means that the whole information about a pattern with support $\U_n$ is entirely contained in a linear pattern with support $\L_n$ of its dyadic encoding. Thus looking for occurrences of a pattern $p$ in a configuration $x$ is the same as looking for occurrences of $\widetilde{p}$ in the configuration~$\Phi(x)$.
 
\begin{proposition}\label{proposition:2}
Let $x\in A^{\H^2}$, $g\in\H^2$ and $n\in\N$. Consider the pattern $p=x_{g\U_n}$ of support $\U_n$, then
$$\pi_2(\Phi(x))_{g\alpha^{n+k}\beta^{2^i+2^k-1}}=\widetilde{p}_i\textrm{ for all }k\geq 1 \textrm{ and } i\in\{0,\dots|\widetilde{p}|-1\}.$$
\end{proposition}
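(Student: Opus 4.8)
The plan is to prove the identity by induction on $k$, reading everything off the two defining relations for the second coordinate of $\Phi$ — namely $\pi_2(\Phi(x))_{g\alpha}=\pi_1(\Phi(x))_g=x_g$ and $\pi_2(\Phi(x))_{g\alpha\beta}=\pi_2(\Phi(x))_g$ — together with the semigroup relation $\alpha\cdot\beta^2=\beta\cdot\alpha$. The geometric content is simple: the rule $\pi_2(\Phi(x))_{g\alpha\beta}=\pi_2(\Phi(x))_g$ says that $\pi_2\circ\Phi$ is constant along every ``down-right path'' $h,\ h\alpha\beta,\ h(\alpha\beta)^2,\dots$, while $\pi_2(\Phi(x))_{g\alpha}=x_g$ says that the first ``down-left'' step from $h$ plants the value $x_h$. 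Using $\beta^j\alpha=\alpha\beta^{2j}$ (obtained by iterating $\alpha\beta^2=\beta\alpha$) one checks that $h\alpha^m\beta^j\cdot\alpha\beta=h\alpha^{m+1}\beta^{2j+1}$, hence $h\alpha^m\beta^j(\alpha\beta)^t=h\alpha^{m+t}\beta^{2^t(j+1)-1}$; so once the symbol $\widetilde p_i$ has been written in the row $g\L_n$ of $\pi_2(\Phi(x))$, at the column dictated by the definition of $\widetilde p$, it is copied verbatim into one cell of every lower row, the landing column in each row being given by this explicit $\beta$-exponent computation.

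I would first dispatch the base case $k=1$: the position $g\alpha^{n+1}\beta^{2^i+1}$ lies on the row $g\L_n$, so by the very definition of $\widetilde p$ (the linear pattern produced on the bottom of $p$ by $\Phi$) together with Proposition~\ref{proposition:ValueBottom}, the symbol read there is $\widetilde p_i$; here one uses the hypothesis $p=x_{g\U_n}$, which guarantees that the cell of $\U_n$ encoded at that column genuinely lies inside $p$ (for the extremal index $i=|\widetilde p|-1$ the column is the all-ones column, and $\widetilde p_i$ is then the ``symbol coming from infinity'' of the discussion following the definition of $\Phi$, so the statement is to be read with the convention of that paragraph). For the inductive step, assuming the claim for $k$, I would peel off a single application of the local rules from the cell $g\alpha^{n+k+1}\beta^{2^i+2^{k+1}-1}$: its $\beta$-exponent being odd, this cell is the down-right child of a cell of the row $g\alpha^{n+k}$, whose $\beta$-exponent is recomputed with $\alpha\beta^2=\beta\alpha$; the rule $\pi_2(\Phi(x))_{g\alpha\beta}=\pi_2(\Phi(x))_g$ identifies the two values, and the induction hypothesis (possibly after the relabelling of the index that is absorbed into the definition of $\widetilde p$) closes the step.

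The main obstacle is exactly this $\beta$-exponent bookkeeping modulo $\alpha\cdot\beta^2=\beta\cdot\alpha$: one must verify that the family of columns $2^i+2^k-1$, for fixed $i$ and varying $k\ge 1$, is precisely the trace, row by row, of the down-right descendants of the single cell of $g\L_n$ that carries $\widetilde p_i$, and that the parent column produced at each step is again of the prescribed shape. This requires a short case analysis, according to the relative sizes of $i$ and $k$ and according to whether the column falls strictly inside the row or on its right boundary — the boundary case being the one that carries the undetermined symbol along the fracture line. Once this arithmetic is laid out, the remainder of the argument is a routine substitution of the two defining relations of $\Phi$.
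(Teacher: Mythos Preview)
Your approach is essentially the same as the paper's: establish the base case on the row $g\L_n$ via Proposition~\ref{proposition:ValueBottom} and the definition of $\widetilde p$, then propagate to lower rows using the invariance $\pi_2(\Phi(x))_{h(\alpha\beta)^k}=\pi_2(\Phi(x))_h$ together with the commutation relation coming from $\alpha\beta^2=\beta\alpha$. The paper compresses this into two lines, invoking $\pi_2(\Phi(x))_{g(\alpha\beta)^k}=\pi_2(\Phi(x))_g$ and the identity $\beta^i\alpha^k=\alpha^k\beta^{2^k i}$ directly, whereas you unroll it as an induction on $k$ with an explicit column recurrence $j\mapsto 2j+1$; but the content is identical.
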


\begin{proof}
From Proposition~\ref{proposition:ValueBottom} we deduce that $\pi_2(\Phi(x))_{g\alpha^{n}\beta^{2^{i-1}-1}}=\widetilde{p}_i$ for all $i\in\{0,\dots|\widetilde{p}|-1\}$. The result follows from the fact that for all $k\geq 1$ one has $\pi_2(\Phi(x))_{g\left(\alpha\beta\right)^k}=\pi_2(\Phi(x))_{g}$ and $\beta^i\alpha^k=\alpha^k\beta^{2^i}$. 
\end{proof}

From Proposition~\ref{proposition:ValueBottom} we deduce that given a pattern $p\in A^{\U_n}$ and a positive integer $k$, it is algorithmically possible to generate the dyadic encodings of patterns $\widetilde{p}^{(k)}$ located $k$ rows below $p$ (and thus $\widetilde{p}=\widetilde{p}^{(1)}$): the way linear patterns $\widetilde{p}^{(k)}$ are spit can be encoded inside a Turing machine. Define $\textrm{split}(p)=\cup_{k\in\N^*}\widetilde{p}^{(k)}$, we deduce the following.

\begin{proposition}\label{proposition:3}
Let $x\in A^{\H^2}$. Then $p$ appears in $x$ if and only if for every $y\in\Phi(x)$, there exists $p'\in\textrm{split}(p)$ such that $p'$ appears in $y$.
\end{proposition}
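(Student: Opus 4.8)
The plan is to reduce Proposition~\ref{proposition:3} to the already-established Proposition~\ref{proposition:ValueBottom} together with the explicit ``fracture line'' behaviour of $\Phi$. First I would recall that $\Phi(x)$ is a single configuration exactly when the chosen tessellation has no fracture line, and otherwise $\Phi(x)$ is a finite set of configurations that differ only in the undetermined symbol propagating along the fracture line; the quantifier ``for every $y\in\Phi(x)$'' is there precisely to make the statement uniform in both cases, so throughout I would write $y$ for an arbitrary element of $\Phi(x)$ (when there is no fracture line, $y=\Phi(x)$ and the quantifier is vacuous).

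For the forward direction, suppose $p$ appears in $x$, say $p=x_{h\cdot\U_n}$ for some position $h$. Fix any $y\in\Phi(x)$. Since $y$ agrees with the canonical dyadic encoding everywhere except possibly on the fracture line, and since the occurrence of $p$ at $h$ is pushed down row by row by $\Phi$, Proposition~\ref{proposition:ValueBottom} applied $k$ times (for $k$ chosen so that the image $\L$-support lies on the side of the fracture line where $y$ is determined, which is always possible because the two branches below a tile split away from any fixed vertical line) gives that $\pi_2(y)_{h\alpha^{n+k-1}\L_{n+k-1}}$ is a linear pattern of the form prescribed by the $k$-th split of $p$, i.e. an element of $\widetilde{p}^{(k)}\subseteq\mathrm{split}(p)$. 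Hence some $p'\in\mathrm{split}(p)$ appears in $y$, as required.

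For the converse, suppose $p$ does \emph{not} appear in $x$. I would argue that then no $p'\in\mathrm{split}(p)$ can appear in any $y\in\Phi(x)$. Indeed, $\mathrm{split}(p)=\bigcup_{k\geq1}\widetilde{p}^{(k)}$, and by the construction of the split (which, as noted after Proposition~\ref{proposition:2}, simply records how the second-coordinate local rule transports the bottom row of $p$ further down), an occurrence of a linear pattern $p'\in\widetilde{p}^{(k)}$ inside $\pi_2(y)$ at a position off the fracture line forces, reading the deterministic local rule of $\Phi$ \emph{upwards}, an occurrence of $p$ in $x$ some $k+n-1$ rows above — contradiction. The only occurrences we must still exclude are those touching the fracture line; but on the fracture line the second coordinate of $y$ is the single undetermined symbol (``coming from infinity''), which does not code any letter of a genuine pattern $p$, so it cannot participate in an occurrence of $p'$ either. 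This handles all $y\in\Phi(x)$ simultaneously, giving the contrapositive.

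\textbf{Main obstacle.} The routine content (transporting occurrences up and down via the local rule of $\Phi$) is immediate from Proposition~\ref{proposition:ValueBottom}; the delicate point is the bookkeeping around the fracture line — making precise that for every $y\in\Phi(x)$ one can choose the depth $k$ large enough, and the horizontal position far enough from the fracture, that the relevant $\L$-support avoids the undetermined column, and conversely that occurrences meeting that column are harmless because the undetermined symbol codes nothing. Getting the quantifier ``for every $y$'' to interact correctly with ``there exists $p'$'' (the $p'$ and the witnessing position may depend on $y$) is the part I would write out most carefully.
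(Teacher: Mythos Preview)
The paper gives no explicit proof of Proposition~\ref{proposition:3}: it is stated as an immediate consequence of the definition of $\mathrm{split}(p)$ together with Propositions~\ref{proposition:ValueBottom} and~\ref{proposition:2} (``we deduce the following''). Your plan is exactly along those lines, so the overall approach matches.

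Two points about your handling of the fracture line, though. In the forward direction the $k$-juggling is unnecessary and does not work as stated: the linear support $\L_{n+k-1}$ \emph{grows} with $k$, so you cannot in general move it away from the fracture line. The direct argument suffices: if $p=x_{h\cdot\U_n}$, then for any $y\in\Phi(x)$ the pattern $\pi_2(y)_{h\cdot\L_n}$ lies in $\widetilde{p}=\widetilde{p}^{(1)}\subseteq\mathrm{split}(p)$, precisely because the definition of $\widetilde{p}$ already ranges over all values at the undetermined positions (including whatever the fracture-line symbol happens to be in that particular $y$).

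In the converse your conclusion is right but the justification is slightly off. The fracture-line symbol \emph{is} a letter of $A$, so it can perfectly well match a letter of some $p'$; the point is rather that the \emph{determined} coordinates of any $p'\in\widetilde{p}^{(k)}$, once $p'$ is placed at $g\cdot\L_m$, are exactly those cells whose $\pi_2$-value traces back through a left step into $g\cdot\U_n$. Such cells are never on the fracture line (which is the unique path taking only right steps upward), hence their values are fixed by $x$ independently of $y$. So if any $p'\in\mathrm{split}(p)$ appears in any $y$, reading the local rule upward recovers $p$ in $x$ --- and this is the stronger statement you are actually proving, which does yield the contrapositive.
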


\section{Effective sets of colourings are sofic on the hyperbolic half-plane} 
\label{section:result}  
 
We say that a set of colourings $\mathbf{T}\subset A^{\H^2}$ is \emph{row-constrained} if there exists a special symbol $\approx\in A$ that appears in every configuration $x\in\mathbf{T}$, and such that its presence forces all letters on rows above it (including letters on the same row) to be also $\approx$ -- we say that the letter $\approx$ has the \emph{half-plane property}. In this section we prove the following result. 
 
\begin{theorem}
 Any row-constrained effective set of colourings on $\H^2$ is sofic. In other words, if $\mathbf{\T}$ is an effective set of colourings on $A$ and if one letter $\approx\in A$ has the half-plane property, then the row-constrained set of colourings $\T\cap\left\{ x\in A^{\H^2}, \approx\text{ appears in }x \right\}$ is sofic. 
\end{theorem}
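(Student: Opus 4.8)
The plan is to simulate the recursively enumerable list of forbidden patterns of $\mathbf{T}$ by a Turing machine and to detect violations along the rows of $\H^2$, exploiting the half-plane property of $\approx$ to get, for free, an infinite supply of "computation zones" hanging below the $\approx$-line. Concretely, I would first apply the dyadic encoding $\Phi$ of Section~\ref{section:dyadic_encoding}, so that any pattern $p$ of support $\U_n$ occurring in a configuration $x$ becomes, by Proposition~\ref{proposition:ValueBottom} and Proposition~\ref{proposition:3}, a linear pattern $\widetilde{p}$ occurring on a single row of $\Phi(x)$. This converts the two-dimensional problem of checking forbidden patterns into a one-dimensional one: on each row we only need to recognise occurrences of the (effectively generated) linear patterns in $\mathrm{split}(p)$, $p\in F$. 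I would then fix a letter-to-letter cover $\Phi(\mathbf{T})\to\mathbf{T}$, so it suffices to show that $\Phi(\mathbf{T})\cap\{x:\approx\text{ appears}\}$ is sofic, equivalently that it is the letter-to-letter image of a CFT.

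Next I would build the covering CFT $\Sigma'$ over an extended alphabet $B$. The symbol $\approx$, having the half-plane property, fixes a topmost row, and every tile strictly below it carries the $\Z\times\N$-grid markers of the construction in Figure~\ref{figure.MT_hyperbolic}; the seed of the computation is placed on the first $\approx$-free row (this is where "row-constrained" is used — without a forced row there would be no canonical place to anchor the computation, which is exactly the obstruction to the full Conjecture~\ref{conjecture}). On this grid I would run a universal-style Turing machine $\mathcal{M}_F$ that enumerates the forbidden patterns of $\mathbf{T}$ and, for each one enumerated and each row, checks — using Proposition~\ref{proposition:2}, which says the coded letters $\widetilde{p}_i$ reappear at the predictable positions $g\alpha^{n+k}\beta^{2^i+2^k-1}$ — whether the corresponding linear pattern is present on that row; if it ever is, the machine sends the whole configuration to a sink and it is killed by the local rules. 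All of this — the grid encoding, the Turing-machine simulation, the pattern-matching along rows — is expressible by a finite set of forbidden patterns, so $\Sigma'$ is genuinely a CFT, and its projection $\Phi^{-1}$ back to $A$ (composing with the $B\to A\times A\to A$ maps) is the desired sofic presentation. Soundness (images lie in $\mathbf{T}\cap\{\approx\text{ appears}\}$) follows because any forbidden pattern present in $x$ is eventually enumerated and detected on some row below it; completeness (every $x\in\mathbf{T}$ with $\approx$ lifts) follows because when no forbidden pattern occurs the machine never halts in a rejecting state, so a valid decoration exists.

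\textbf{Main obstacle.} The delicate point is the bookkeeping that matches the geometry of $\H^2$ with the geometry of Turing-machine space-time: a pattern $p$ of support $\U_n$ lives across $n$ rows, but after $\Phi$ its content $\widetilde{p}$ sits on one row $\L_n$ of length $2^n$, and one must verify that an honest finite local rule can (i) recognise, purely locally along a row, that a given finite word is an occurrence of $\widetilde{p}$, and (ii) coordinate the successive rows so that the Turing machine enumerating $F$ has enough space and time — the exponential growth of row length as one descends is what makes this possible, but one must check that the machine's $k$-th step genuinely has room on the row $2^k$ levels down. The fracture-line case, where $\Phi$ introduces an undetermined symbol propagating to infinity, also needs a separate (easy) argument showing the cover still exists there. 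I expect the rest — that the resulting pattern set is finite, and that the two inclusions defining $\Phi(\mathbf{T})\cap\{\approx\text{ appears}\}=\Phi^{-1}\!\big(\text{image}\big)$ hold — to be routine given Propositions~\ref{proposition:ValueBottom}, \ref{proposition:2} and~\ref{proposition:3}.
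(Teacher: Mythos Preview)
Your overall strategy --- dyadic encoding to linearise patterns, then a Turing machine anchored by the $\approx$-row to enumerate and detect them --- is exactly the paper's. You also correctly identify the main obstacle. But your proposed resolution of that obstacle is where the gap lies.

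You write that ``the seed of the computation is placed on the first $\approx$-free row'' and that the machine, via the $\Z\times\N$-grid of Figure~\ref{figure.MT_hyperbolic}, will ``for each row'' check whether some $\widetilde p$ occurs. Two problems. First, local rules cannot single out \emph{one} seed cell on an infinite row; at best your CFT cover has one configuration per choice of seed, and you must then argue that \emph{every} choice detects \emph{every} forbidden occurrence. Second, and more seriously, in the $\Z\times\N$ embedding the $\bullet$-cells on row $t$ sit at spacing $2^t$, so the machine only reads the dyadic layer at an exponentially sparse set of positions. A forbidden pattern $p$ sitting at some $g$ produces its linear traces $\widetilde p^{(k)}$ at positions governed by $g$ (Proposition~\ref{proposition:2}), and there is no reason these ever land on $\bullet$-cells of your grid. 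Saying the machine ``checks each row'' glosses over the fact that a row of length $2^t$ cannot be scanned in one time step, and your sketch gives no mechanism for the head to visit the non-$\bullet$ cells where the incriminating data may live.

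The paper's fix is not a single computation but a tiling of the half-plane by \emph{finite} computation zones (its Layer~2): the first $\approx$-free row is initialised as $^\infty(a^*b^*)^\infty$, creating infinitely many width-$1$ zones; adjacent $a$- and $b$-zones merge whenever a machine runs out of tape, so widths grow unboundedly as one descends; and each zone carries a \emph{detecting tape} three times its own width, overlapping its neighbours, on which it scans the dyadic layer directly. The overlaps guarantee that every linear pattern $\widetilde p^{(k)}$ is eventually contained in the detecting tape of some zone wide enough to have enumerated $p$ --- this is what makes Proposition~\ref{proposition:3} bite. That zone-merging architecture, not a single seeded grid, is the missing idea in your sketch.
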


For more readability, if $x$ is a configuration of a row-constrained set of colourings, we do not picture the half-plane filled with $\approx$ but letters below this half plane with a double line on the top of the pentagon (see Figure~\ref{figure:unconstrained_computation_zones} for instance). 
 
\subsection{Sketch of the proof}

We will encode Turing machine computations inside a row-constrained CFT on $\H^2$. Thanks to the dyadic encoding presented in Section~\ref{section:dyadic_encoding} it is enough to check the occurrences of forbidden patterns produced by this machine on infinitely many rows in the dyadic encoding of the original configuration.

\subsection{A four layers construction}
\label{subsection:four_layers}

Let $\mathbf{T}$ be a row-constrained effective set of colourings on some alphabet~$A$. Let $\mathcal{M}$ be a Turing machine that enumerates a set of forbidden patterns for $\mathbf{T}$ -- we assume that the machine runs on a one-sided tape. We construct a four layers row-constrained CFT $\Sigma_{\mathcal{M}}=\Sigma_1\times\Sigma_2\times\Sigma_3\times\Sigma_4$. %Remark that in this construction, only the third layer $\Sigma_3$ will be row-constrained, and so will be $\Sigma_{\mathcal{M}}$.

\paragraph{\textbf{First layer: configurations in $A^{\H^2}$.}} The first layer $\Sigma_1$ only contains configurations in $A^{\H^2}$, with no constraint on them, except the ones that will be given by interaction with other layers.

\paragraph{\textbf{Second layer: computation zones.}} The second layer $\Sigma_2$ contains computation zones. First define the row-constrained CFT $\Sigma_2'$ on alphabet $\left\{ a,b,a^*,b^*,c\right\}$ as the one defined by the set of allowed patterns with support $\U_1\cup\beta\cdot\U_1$ appearing in Figure~\ref{figure:unconstrained_computation_zones}. By local interaction with the first layer, we force that if a product letter $(\approx,\lambda)$ appears, then $\lambda=c$ and that in $\Sigma_2$, the row immediately below the lowest row of $c$ is $^\infty(a^*b^*)^\infty$. Then $\Sigma_2$ is the row-constrained CFT on the product alphabet $\left\{ a,b,a^*,b^*,c\right\}\times\{\bullet,\emptyset\}$, seen as a subset of $\Sigma_2'\times\Sigma_{\Z\times\N}$ with the additional rule that a letter of the product alphabet with either $a^*$ or $b^*$ on its first coordinate is always associated with $\bullet$ on its second coordinate.

% \begin{figure}[ht!]
% \centering
% \input{figures/figure:allowed_patterns_zones}
% \caption{Allowed patterns that defines computation zones.}
% \label{figure:allowed_patterns_zones}
% \end{figure}
\begin{figure}[ht!]
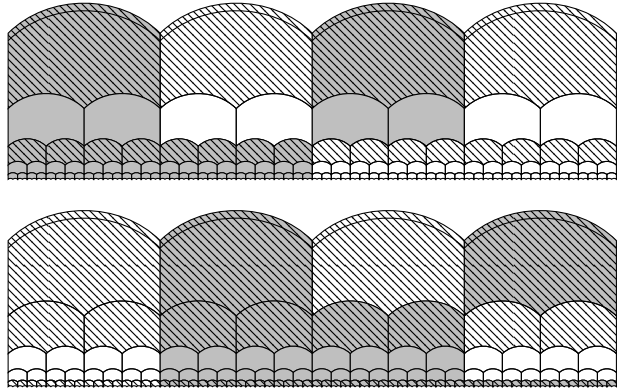

\centering
% [inline block 1: 1 envs, 65315 chars -> data_tex | \begin{tikzpicture}[scale=0.25] ...]


\caption{Two examples of row-constrained configurations on $\left\{ a,b,a^*,b^*,c\right\}$ in $\Sigma_2'$. For more readability, all letters $c$ are omitted and replaced by a double top edge, and letters $a$ and $a^*$ are represented by Grey cells, letters $b$ and $b^*$ by white cells and letters $a^*$ and $b^*$ with stripes. The first row is always a $^\infty(a^*b^*)^\infty$ row, and $*$-rows and non-$*$-rows do not necessarily alternate. A tile below a $*$-tile is of the same type ($a$ for $a^*$ and $b$ for $b^*$). Type $a$ and type $b$ zones always merge on a $*$-row.}
\label{figure:unconstrained_computation_zones}
\end{figure}

Let $x\in\Sigma_2$. We call \emph{type $a$ computation zone} (resp. \emph{type $b$ computation zone}) a pattern uniformly filled with product letters with $a$'s and $a^*$'s (resp. $b$'s and $b^*$'s) on its first coordinate, with support $\bigcup_{i=1}^{k} \beta^i\cdot\U_n$ where $n$ -- the \emph{height} -- and $k$ -- the \emph{width} --  are maximal, that appears in $x$. This second layer is made such that every configuration in $\Sigma_2$ is made of computation zones of type $a$ and $b$ that alternate, and type $a$ zones merge with their type $b$ right neighbour to form a larger computation zone. With no more constraints, these patterns may define infinitely high computation zones of bounded width at some point, as suggested in Figure~\ref{figure:unconstrained_computation_zones} -- all computation zones are of width $2^k$ thank to the initial alternation of $a^*$ and $b^*$. We will see how this problem is fixed by interacting with the 
fourth layer: the Turing machine itself will choose the height of every computation zone so that it can perform as many steps of calculation as the width of the computation zone allows.

\paragraph{\textbf{Third layer: dyadic encoding.}} The third layer $\Sigma_3$ contains one (since there can be an undetermined symbol) dyadic encoding of the first layer $\Phi(\Sigma_1)$. This can be done using local rules between first and third layers.

\paragraph{\textbf{Fourth layer: Turing machine calculations.}} The fourth layer $\Sigma_4$ contains calculations of a Turing machine $\widetilde{\MT}$ with the following behaviour
\begin{enumerate}
 \item the machine $\widetilde{\MT}$ has two tapes, the \emph{computation tape} that is initially filled with blank symbols $\sharp$, whose width coincides with the width of the computation zone, and the \emph{detecting tape} whose width coincides with the width of the computation zone and its left and right neighbours (hence three times wider than the computation tape) on which the first row of $\Sigma_1$ is copied out. The widths and overlaps of detecting tapes ensure that any linear pattern will eventually be contained in a single detecting tape. Note that at most three detecting tapes may overlap, which ensures finiteness of the alphabet needed.
 \item the machine $\widetilde{\MT}$ simulates $\MT$, and for each pattern produced by $\MT$ transforms it into the set $\mathcal{P}$ of patterns in $\textrm{split}(p)$ -- defined in Section~\ref{subsection.detecting_patterns} -- that can fit inside the computation zone (thus $\mathcal{P}$ is always finite)
 \item it checks whether elements of $\mathcal{P}$ appear or not on the detecting tape
 \item if a pattern of $\mathcal{P}$ is detected, then the machine $\widetilde{\MT}$ instantaneously reaches a special state $q_f$, that will be forbidden in the final set of colourings
 \item once the head of calculation tries to go to the right of the rightmost cell in the current computation zone, the computation zone is closed and merges with its right or left neighbour to get a twice bigger computation zone. This ensures that if the Turing machine lacks some space on its tape in some computation zone, there exists a bigger zone in which the machine can perform more steps of calculation (it may actually happen that all computation zones have bounded width; this is the case when $\Sigma$ is CFT). All this can be done using local rules between second and fourth layers.
\end{enumerate}
All conditions above ensure that a computation zone located at $g\cdot\bigcup_{i=1}^{k} \beta^i\cdot\U_n$ checks whether a pattern from $\widetilde{p}$ appears in $g.\L_n$, i.e. whether the pattern $p$ appears somewhere above the computation zone.

\subsection{Main result}
 
% We now have all tools to prove the main result of the paper. 

\begin{customthm}{1}
 Row-constrained effective sets of colourings on $\H^2$ are sofic.
\end{customthm}

\begin{proof}
Let $\T$ be a row-constrained effective set of colourings. We denote by $\Sigma=\pi_1\left(\Sigma_{\mathcal{M}}\right)$ where $\Sigma_{\mathcal{M}}$ is the sofic set of colourings described in Section~\ref{subsection:four_layers}. It is straightforward that the construction described in Section~\ref{subsection:four_layers} provides a sofic set of colourings $\Sigma$ such that $\T\subseteq\Sigma$. It remains to prove that the local rules defining $\Sigma$ forces all configurations $x\in\Sigma$ to also belong to $\T$.  

Let $x$ be in $\Sigma$, and suppose that a pattern $p=\pi_1(x_{g\cdot\U_n})$ is forbidden in $\T$. Then the Turing machine that enumerates forbidden patterns in $\T$ will enumerate $p$ at some point. By construction of the computation zones on the second layer, there exists a computation zone large enough to check whether one element of $\textrm{split}(p)$ appears on its associated detecting tape (and one can assume it is located below $p$, if not take a bigger zone below it). By Proposition~\ref{proposition:3}, it is enough to ensure that any forbidden pattern will eventually be detected, consequently $x$ is in $\T$.
\end{proof}

\section{Conclusion}

The ideas presented in this article constitute only a first step in proving Conjecture~\ref{conjecture}. The natural idea would consist in using Goodman-Strauss hierarchical aperiodic tiling of the hyperbolic plane~\cite{DBLP:journals/tcs/Goodman-Strauss10} to define arbitrarily large computation zones without any constraints. But a few points still remain to be clarified. %The main difference with Robinson like constructions is that the hierarchy defined in~\cite{DBLP:journals/tcs/Goodman-Strauss10} cannot be defined by a substitution, hence the underlying structure is slightly more complicated. It is not even clear how to transfer information inside a computation zone of level $n$ through lowest level zones.
% Note also that Conjecture~\ref{conjecture} is optimal in a certain sense: the statement is no longer true on Baumslag-Solitar group $<a,b\mid ab^2=ba>$, since the counting argument presented in~\cite[p.~14]{Vanier2012} can be adapted to prove that the mirror subshift is not sofic on this structure.

\bibliographystyle{alpha}
\bibliography{effectifs_demi_plan_hyperbolique}

\end{document}